\def\@seccntformat#1{\@ifundefined{#1@cntformat}%
   {\csname the#1\endcsname\quad}  
   {\csname #1@cntformat\endcsname}
}
\let\oldappendix\appendix 
\renewcommand\appendix{%
    \oldappendix
    \newcommand{\section@cntformat}{\appendixname~\thesection:\;}
}
\newtheorem{lem}  {Lemma} 
\newcommand {\BL} {\begin{lem}} 
\newcommand {\EL} {\end{lem}} 
\newtheorem{thm} {Theorem} 
\newcommand {\BT} {\begin{thm}}
\newcommand {\ET} {\end{thm}}
\newtheorem{defi} {Definition} 
\newcommand {\BD} {\begin{defi}}
\newcommand {\ED} {\end{defi}}
\newtheorem{pro} {Problem} 
\newcommand {\BP} {\begin{pro}}
\newcommand {\EP} {\end{pro}}
\newtheorem{thmmm} {Property} 
\newcommand {\BPR} {\begin{thmmm}}
\newcommand {\EPR} {\end{thmmm}}
\newtheorem{cor} {Corollary} 
\newcommand {\BCR} {\begin{cor}}
\newcommand {\ECR} {\end{cor}}
\newtheorem{coor} {Observation} 
\newcommand {\BO} {\begin{coor}}
\newcommand {\EO} {\end{coor}}
\newcommand{\lcp}{\mathsf{lcp}}
\newcommand{\trie}{\mathcal{T}}
\newcommand{\s}{\mathcal{S}}
\newcommand{\mr}{\mathsf{maxRun}}
\newcommand{\nd}{\mathsf{nodeDepth}}
\newcommand{\sd}{\mathsf{strDepth}}
\newcommand{\Y}{\mathsf{Y}}
\newcommand{\ACS}{\mathsf{ACS}}
\newcommand{\cha}{\mathsf{char}}
\newcommand{\fre}{\mathsf{freq}}
\newcommand{\X}{\mathsf{X}}
\newcommand{\Dist}{\mathsf{Dist}}
\newcommand{\wt}{\mathsf{weight}}
\title{On Computing Average Common Substring Over Run Length Encoded Sequences}
\author{Sahar Hooshmand$^1$, Neda Tavakoli$^2$, Paniz Abedin$^1$, Sharma V. Thankachan$^1$
}
\date{}
\institute{
$^1$Dept. of Computer Science,  University of Central Florida, Orlando, USA.\\
\email{\{sahar,paniz\}@cs.ucf.edu, sharma.thankachan@ucf.edu
}
\\
$^2$School of Computational Science \& Engineering\\
Georgia Institute of Technology, USA. \\
\email{neda.tavakoli@gatech.edu}}
\begin{document}
\maketitle
\thispagestyle{plain}
\begin{abstract}
The Average Common Substring (ACS) is a popular alignment-free distance measure for phylogeny reconstruction. The ACS of a sequence
$\X[1,x]$ w.r.t.~another sequence $\Y[1,y]$ is
$$\ACS(\X,\Y) = \frac{1}{x} \sum_{i=1}^{x}\max_j \lcp(\X[i,x], \Y[j,y])$$
The $\lcp(\cdot,\cdot)$ of two input sequences is the length of their  longest common prefix. 
The ACS can be computed in $O(n)$ space and time, where 
$n=x+y$ is the input size. 
The compressed string matching is the study of string matching problems with the following twist: the input data is in a compressed format and the underling task must be performed  with little or no decompression. 
In this paper, we revisit the ACS problem under this paradigm where the input sequences are given in their run-length encoded format. We present an algorithm to compute $\ACS(\X,\Y)$  in $O(N\log N)$ time using $O(N)$ space, where $N$ is the total length of sequences after run-length encoding. 
\end{abstract}

\section{Introduction and Related Work}
\label{intro}

The Average Common Substring (ACS), proposed by Burstein {\it et al.}~\cite{RECOMB05}, is a simple alignment-free sequence comparison method. 
This measure and its various extensions~\cite{RECOMB15,ApostolicoGLP16,Leim,Manzini15,Thank1,ThankachanCLAA16,ThankachanCLKA15} have proven to be useful in multiple applications~\cite{apostolico2010maximal, bonham2013alignment, chang2011phylogenetic, comin2012alignment, domazet2009efficient, guyon2009comparison}.
Formally, ACS of a sequence
$\X[1,x]$ w.r.t.~another sequence $\Y[1,y]$, denoted by $\ACS(\X,\Y)$, is
$$\ACS(\X,\Y) = \frac{1}{x} \sum_{i=1}^{x} L[i], \textbf{ where } L[i] = \max_j \lcp(\X[i,x], \Y[j,y])$$
The $\lcp(\cdot,\cdot)$ of two input sequences is the length of their  longest common prefix. 
 The (symmetric) distance based on ACS is~\cite{RECOMB05}:
$$\Dist(\X,\Y) = \frac{1}{2} \bigg(\frac{\log |\Y|}{\ACS(\X,\Y)}+\frac{\log |\X|}{\ACS(\Y,\X)}\bigg)-
\frac{1}{2}\bigg(\frac{\log |\X|}{\ACS(\X,\X)}+\frac{\log |\Y|}{\ACS(\Y,\Y)}\bigg)$$

The computation of ACS is straightforward in $O(n)$ space and time using the generalized suffix tree of $\X$ and $\Y$, where $n=x+y$ is the input size~\cite{RECOMB05}. In this paper, we study the problem of computing ACS, where the input sequences are $\X'[1,x']$ and $\Y'[1,y']$, where $\X'[1,x']$ (resp., $\Y'[1,y']$) is the sequence corresponding to the run-length encoding of $\X[1,x]$ (resp., $\Y[1,y]$). Run-length  encoding is a simple algorithm used for data compression in which runs of data (occurring of the same character on consecutive positions) are stored as a single charter followed by the count of its consecutive occurrences.
The challenge here is to design an   algorithm for computing ACS in space and time close to $O(N)$ instead of $O(n)$, where $N=x'+y'$. We answer this question positively by presenting the following theorem. 

\begin{theorem} \label{main-theorem}
Given two input sequences in their run-length encoded format, the distance between them based on the Average Common Substring (ACS) measure can be computed in $O(N)$ space and $O(N\log N)$ time, where $N$ is the total length of sequences after run-length encoding. 
\end{theorem}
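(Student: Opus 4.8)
The plan is to express $\ACS(\X,\Y)=\tfrac1x\sum_{i=1}^x L[i]$ through a constant number of global traversals of an $O(N)$-size suffix-tree-like structure, and symmetrically for $\ACS(\Y,\X)$. The two self-terms are free: the best match of $\X[i,x]$ against a suffix of $\X$ is $\X[i,x]$ itself, so $L[i]=x-i+1$ and $\ACS(\X,\X)=(x+1)/2$, likewise $\ACS(\Y,\Y)=(y+1)/2$, with $x=\sum_i e_i$ and $y$ read directly from the encodings. Hence it suffices to compute $S:=\sum_{i=1}^x L[i]$.

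First I would group the positions $i$ by the run of $\X$ containing them. Take a run of character $c$ of length $e$ and let $R$ be the (possibly empty) suffix of $\X$ that follows it; for $d=1,\dots,e$ the suffix starting $e-d$ positions into the run equals $c^{d}R$, and I write $L_d$ for its $L$-value. Because $R$ does not begin with $c$ and every character of $\Y$ sits inside a run, a run of $c$ in $\Y$ of length $m$ with following tail $T$ yields a match of length $d+\lcp(R,T)$ for $L_d$ when $m\ge d$ (align the run ends) and only $m$ when $m<d$, while runs on other characters yield $0$. Therefore $L_d=\max(\beta_d,\,d+\alpha_d)$ when $d\le M_c$ and $L_d=M_c$ when $d>M_c$, where $M_c$ is the longest run of $c$ in $\Y$, $\beta_d\le d-1$ is the longest such run shorter than $d$, and $\alpha_d\ge 0$ is the largest $\lcp(R,T)$ over tails $T$ of runs of $c$ in $\Y$ of length $\ge d$. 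Since $\beta_d\le d-1<d+\alpha_d$, the maximum collapses and the run's total contribution to $S$ is $\binom{D+1}{2}+\sum_{d=1}^{D}\alpha_d+(e-M_c)^{+}M_c$ with $D=\min(e,M_c)$; summing over all runs of $\X$ (runs on characters absent from $\Y$ contribute $0$) gives $S$, so everything reduces to evaluating $\sum_{d=1}^{D}\alpha_d$ for every run of $\X$.

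For this I would build the RLE suffix tree $\mathcal T'$: the generalized suffix tree of $\X'$ and $\Y'$ read over the alphabet of (character, exponent) pairs, constructible in $O(N\log N)$ time after sorting the $N$ pairs, in which each node carries as $\sd$ the sum of the exponents along its root path and each bundle of equal-character sibling edges is expanded into a caterpillar so that for any two run-boundary suffixes the character length of their longest common prefix equals the $\sd$ of their lowest common ancestor; $\mathcal T'$ still has $O(N)$ nodes and, after LCA preprocessing, every $\lcp(R,T)$ is available in $O(1)$. Fix $c$: the tail of each run of $c$ in $\Y$ is a ``blue'' leaf of weight equal to its run length, the suffix following each run of $c$ in $\X$ is a ``red'' leaf with cap $D_R=\min(e,M_c)$, and, writing $W(v)$ for the maximum blue weight in the subtree of $v$, monotonicity of $W$ along root paths gives $\alpha_d(R)=\max\{\sd(v):v\text{ ancestor of }R,\ W(v)\ge d\}$. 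Expanding $\sd(v)=\sum_{w\preceq v,\ w\neq\mathrm{root}}\ell(w)$ with $\ell(w)$ the character length of the edge above $w$, and exchanging the order of summation, yields per character $c$
$$\sum_{R\text{ red}}\ \sum_{d=1}^{D_R}\alpha_d(R)\;=\;\sum_{w\neq\mathrm{root}}\ell(w)\sum_{R\text{ red in }\mathrm{subtree}(w)}\min\bigl(D_R,\,W(w)\bigr).$$

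The last step is to evaluate this sum in near-linear time. I would contract $\mathcal T'$ to the auxiliary (Steiner) tree of the red and blue leaves of colour $c$, which has $O(|\mathcal R^{\X}_c|+|\mathcal R^{\Y}_c|)$ nodes and is built in that time from the leaves' Euler order and constant-time LCA queries; along any contracted edge both $W$ and the multiset of red leaves below are constant, so the edge-sum collapses and what remains is, for every auxiliary node $q$, the quantity $\sum_{R\in\mathrm{subtree}(q)}\min(D_R,W(q))$ --- i.e.\ the number of red leaves below $q$ with $D_R\ge W(q)$ times $W(q)$, plus the sum of $D_R$ over those below $q$ with $D_R<W(q)$. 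These I would answer offline with an Euler tour of the auxiliary tree and a Fenwick tree over Euler positions, sweeping nodes and red leaves in increasing order of the threshold $W(q)$, in $O((|\mathcal R^{\X}_c|+|\mathcal R^{\Y}_c|)\log N)$ time; summed over $c$ this is $O(N\log N)$ time and $O(N)$ space. Adding the $\binom{D+1}{2}$ and $(e-M_c)^{+}M_c$ terms (which need only $M_c$ for each character, precomputed in $O(N)$) produces $S$; doing the same for $\ACS(\Y,\X)$ and substituting into the displayed formula for $\Dist$ completes the proof. I expect the main obstacle to be exactly this double-summation exchange together with the auxiliary-tree contraction: evaluating $\sum_{d=1}^{D}\alpha_d$ directly costs, per run of $\X$, time proportional to the number of distinct run lengths of the matching character in $\Y$, which is $\Theta(N^2)$ overall; a secondary subtlety is building $\mathcal T'$ so that common-prefix lengths are measured in characters rather than in runs, which the caterpillar expansion takes care of.
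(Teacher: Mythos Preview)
Your decomposition is the same as the paper's: you group positions of $\X$ by run, derive that $L_d=d+\alpha_d$ for $d\le M_c$ and $L_d=M_c$ otherwise (this is exactly Observation~2 in the paper, with your $\alpha_d$ equal to their $\sd(v)$), and you build the same $O(N)$-size trie on run-boundary suffixes with character-measured string depths. The only substantive divergence is how you evaluate $\sum_{d=1}^{D}\alpha_d$ for each run.

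You expand $\sd$ as a sum of edge lengths, swap the order of summation to get a per-edge expression $\ell(w)\sum_{R\ \text{below }w}\min(D_R,W(w))$, contract to the Steiner tree on the colour-$c$ leaves, and then answer all nodes' subtree queries by a threshold sweep with a Fenwick tree. This is correct and gives $O(N\log N)$. The paper takes a lighter route: on the same contracted tree it defines, top-down, $\wt(v)=\wt(\parent(v))+\fre(v)\,(\sd(v)-\sd(\parent(v)))$ with $\wt(\text{root})=0$; an Abel-summation check shows that for any leaf $w$ and any threshold $t\le\fre(\text{root})$, $\sum_{d=1}^{t}\alpha_d(w)=\wt(v)-\wt(u)+t\cdot\sd(u)$, where $v$ is the lowest ancestor of $w$ with $\fre>0$ and $u$ the lowest with $\fre\ge t$. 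After an $O(N)$ pass to fill in $\wt$, each run's contribution is read off from two nodes found by binary search with level-ancestor queries, in $O(\log N)$ per run. So both approaches match the stated bounds; the paper's avoids the Fenwick sweep entirely and is a bit more direct, while your edge-summation viewpoint is more generic and would adapt more easily if the per-run cap $D_R$ were replaced by something less structured. One small thing to tidy in your write-up: the empty tail $R$ after the last run of $\X$ is not a leaf of $\mathcal T'$, so handle that run separately (the paper does: it fixes $v$ to be the root in that case).
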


\section{Notation and Background} \label{sec:preli}
Let $\Sigma$ be the alphabet set from which the symbols in $\X$ and $\Y$ are drawn from. 
We denote $\X, \X'$ and $\Y, \Y'$ as follows: 
$$\X = \alpha_1^{f_1} \alpha_2^{f_2} \alpha_3^{f_3} ... \alpha_{x'}^{f_{x'}} \textbf{ and } \X' = (\alpha_1,{f_1}) (\alpha_2,{f_2}) (\alpha_3,{f_3}) ... (\alpha_{x'},{f_{x'}})  $$
$$\Y = \beta_1^{g_1} \beta_2^{g_2} \beta_3^{g_3} ... \beta_{y'}^{g_{y'}} \textbf{ and } \Y' = (\beta_1,{g_1}) (\beta_2,{g_2}) (\beta_3,{g_3}) ... (\beta_{y'},{g_{y'}})$$
Specifically, $\X$ is the concatenation of $f_1$ occurrences of $\alpha_1$ followed by 
$f_2$ occurrences of $\alpha_2$, and so on. Similarly, $\Y$ is the concatenation of $g_1$ occurrences of $\beta_1$ followed by $g_2$ occurrences of $\beta_2$, and so on. Here $\alpha_i, \beta_i \in \Sigma$ and $f_i, g_i \in \{1,2,3,...,n\}$. 
The lexicographic order between two characters $(c,k)$ and $(c',k')$ in the encoded sequences is defined as follows: 
$(c,k)$ is lexicographically smaller than $(c',k')$ iff either $c$ is lexicographically smaller than $c'$ or $c=c'$ and $k <k'$. 
Also, define the suffixes 
$$\X'[i,x'] = (\alpha_i,{f_i}) (\alpha_{i+1},f_{i+1}) ... (\alpha_{x'},{f_{x'}}) \textbf{ and } $$ 
$$ \Y'[i,y'] =  (\beta_i,{g_i}) (\beta_{i+1},g_{i+1})  ... (\beta_{y'},{g_{y'}})$$
If a suffix is a prefix of another suffix, we say that the shortest one is lexicographically smaller. 
Notice that for each $\X'[i,x']$ (resp., $\Y'[i,y']$), there exists an {\bf equivalent suffix} 
$\X[F(i),x]$ (resp., $\Y[G(i),y]$) of $\X$ (resp., $\Y$). Specifically, 
$$F(i) = 1+ \sum_{k<i} f_k  \textbf{ and } G(i) = 1+\sum_{k<i} g_k $$

\BO \label{obs1}
The $k$th lexicographically smallest suffix in $\s$ and the $k$th lexicographically smallest suffix in $\s'$ are equivalent for all values of $k \in [1,N]$, where 
$$\s = \{\X[F(i),x] \mid 1 \leq i \leq x' \} \cup \{\Y[G(i),y] \mid 1\leq i \leq y' \} $$ 
$$\textbf{ and } \s' = \{\X'[i,x'] \mid 1 \leq i \leq x' \} \cup \{\Y'[i,y'] \mid 1\leq i \leq y' \}$$ 
\EO 


The main component of our algorithm is a trie $\trie$ over all strings in $\s$.
It consists of $N$ leaves and at most $N-1$ internal nodes.
Each leaf node in $\trie$ corresponds to a unique suffix in $\s$. 
Specifically, the $i$th leftmost leaf $\ell_i$ corresponds to the $i$th lexicographically smallest suffix in $\s$. 
Each internal node $v$ is associated with two values, (i) $\nd(v)$: the number of nodes on the path from root to $v$ and (ii) $\sd(v)$: the length of the longest common prefix over all suffixes corresponding to the leaves under $v$. Additionally, we call a leaf type-X (resp., type-Y) if the suffix corresponding to it is from $\X$ (resp., $\Y$). The space occupancy of $\trie$ is $O(N)$ words. 

\begin{lemma}
The trie $\trie$ can be constructed in $O(N\log N)$ time using $O(N)$ space. 
\end{lemma}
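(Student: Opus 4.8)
The plan is to reduce the construction of $\trie$ to computing a generalized suffix array, together with longest‑common‑prefix information, \emph{over the run‑length‑encoded sequences} $\X'$ and $\Y'$, and then to ``decompress'' that information on the fly. First I would append to $\X$ and to $\Y$ a distinct sentinel symbol smaller than every symbol of $\Sigma$; this raises $N$ only by a constant and guarantees that no string of $\s$ is a prefix of another, so $\trie$ indeed has $N$ leaves and at most $N-1$ internal nodes. Next I would sort the $O(N)$ run‑pairs occurring in $\X'$ and $\Y'$ under the order on pairs defined in Section~\ref{sec:preli} and replace each pair by its rank; this is the one step that costs $O(N\log N)$, and it turns $\X'$ and $\Y'$ into strings over an integer alphabet of size $O(N)$ (alternatively one could run an $O(N\log N)$ suffix‑array algorithm directly). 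Concatenating the two relabeled strings with fresh smallest separators and running any linear‑time suffix‑array algorithm then produces, in $O(N)$ time and space, the lexicographic order of all suffixes of $\s'$; by Observation~\ref{obs1} this is exactly the lexicographic order of $\s$, so it identifies the suffix, and its type (type‑X or type‑Y), sitting at each leaf $\ell_1,\dots,\ell_N$.

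Second, I need the string depths that will label the internal nodes. I would run a standard $O(N)$‑time longest‑common‑prefix array computation to get, for every consecutive pair $\ell_k,\ell_{k+1}$, the number of leading \emph{run‑pairs} they share. That number is not yet the true common‑prefix length, so I would convert it: with prefix sums of the $f_i$'s and $g_j$'s precomputed — i.e.\ the functions $F$ and $G$ — I can read off in $O(1)$ time the total length $S$ of the matched runs, together with the character and multiplicity of the first mismatching run on each side. If those two runs carry different characters, the true $\lcp$ is $S$; if they carry the same character but different multiplicities, the common prefix extends by the smaller of the two multiplicities, giving $S$ plus that amount; the sentinels dispatch the end‑of‑string cases (none of them can actually arise, since no string of $\s$ is a prefix of another). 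This yields all $N-1$ consecutive true‑$\lcp$ values in $O(N)$ time and $O(N)$ space.

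Third, I would assemble $\trie$ from the leaf order and the consecutive true‑$\lcp$ values by the usual incremental, stack‑based procedure: keep the current rightmost root‑to‑leaf path as a stack of internal nodes ordered by string depth; to attach $\ell_{k+1}$, pop every node whose string depth exceeds the $\lcp$ of $\ell_k,\ell_{k+1}$, create at most one new internal node of that string depth on the path, and hang $\ell_{k+1}$ beneath it. Setting $\sd$ of each new node to its $\lcp$ value and $\nd$ of each new node to one more than its parent's $\nd$ supplies both annotations at no extra cost. Each node is pushed and popped once, so this phase is $O(N)$ time; crucially I never materialize edge labels, only pointers and the two integer fields, so the whole structure — and every intermediate array — fits in $O(N)$ words. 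Summing the three phases gives the claimed $O(N\log N)$ time and $O(N)$ space.

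The only genuinely delicate point is the $\lcp$ decompression of the second phase: the passage from ``number of matching runs'' to ``length of matching prefix'' is not a plain rescaling, because two suffixes whose first mismatching runs happen to share a character still agree for a few more characters, and the degenerate end‑of‑string cases must be cleanly ruled out (which is why I introduce the sentinels up front). Everything else is routine once Observation~\ref{obs1} licenses treating the suffix array of $\s'$ as the suffix array of $\s$.
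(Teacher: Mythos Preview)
Your approach is correct and is essentially the same as the paper's: build the generalized suffix structure of the run-length encoded strings $\X'$ and $\Y'$, invoke Observation~\ref{obs1} to obtain the leaf order of $\trie$, and then convert encoded longest-common-prefix information into the true string depths. The paper states this in one sentence (using a generalized suffix tree rather than a suffix array plus Cartesian-tree build) and defers all details to the full version; you have supplied exactly those details, including the $O(N\log N)$ rank-relabeling of the run pairs and the $\lcp$ decompression step, which are the two points the paper leaves implicit.
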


\begin{proof}
We construct a generalized suffix tree of $\X'$ and $\Y'$ and then convert it into $\trie$~\cite{ST1,ST2,Farach97} by exploring Observation~\ref{obs1}. We defer the details to the full version of this paper. \qed 
\end{proof}



%
%

\section{An $O(N)$-Space and ${O}(n\log N)$-Time Algorithm}

The first step is to construct $\trie$ from $\X'$ and $\Y'$.
Then, we associate each leaf node (except two\footnote{Specifically, the leaves corresponding to $\X[F(x'),x]$ and $\Y[G(y'),y]$.}) with two values, $\cha(\cdot)$ and $\fre(\cdot)$ as follows: 
Let $\ell_a$ be the leaf corresponding to $\X[F(i+1),x]$. Then  
 $$\cha(\ell_a) = \alpha_{i} \text{ and } \fre(\ell_a) = f_{i}$$ 
Similarly, let $\ell_b$ be the leaf corresponding to 
 $\Y[G(j+1),y]$. Then 
 $$\cha(\ell_b) = \beta_{j} \text{ and } \fre(\ell_b) = g_{j}$$
For each $\sigma \in \Sigma$, define (and compute) 
$$\mr(\sigma) = \max\{g_k \mid k \in [1,y'] \textbf{ and } \beta_k = \sigma \}$$
The key intuition behind our algorithm is the following simple observation.

\BO \label{obs2}
Let $\ell_a$ be the leaf in $\trie$ corresponding to the suffix $\X[F(i+1),x]$ for an $i \in [1,x'-1]$. 
Also, let $\X[p,x] = \alpha_{i}^h \circ \X[F(i+1),x] $ for some $h \in [1,f_i]$. 
Specifically, 
$p = F(i+1)-h$ and ``$\circ$" denotes concatenation. 
Then $L[p]$ 
\begin{itemize}
\item  is $\mr(\alpha_{i})$ if $h > \mr(\alpha_{i})$ and  

\item  is $h +\sd(v)$ otherwise, where node $v$ is the lowest ancestor of $\ell_a$ such that there exists a type-Y leaf under $v$ with $\cha(\cdot) = \alpha_{i}$ and $\fre(\cdot) \geq h$.
\end{itemize}
\EO
We now present an efficient algorithm for computing $L[\cdot]$'s based on the above observation. 
First we construct a collection $\{ \trie_{\sigma} \mid \sigma \in \Sigma \}$ of new tries from $\trie$. Specifically, the  $\trie_{\sigma}$ is a compact trie over all those suffixes in $\trie$, such that $\cha(\cdot)$ of the leaves corresponding to them is $\sigma$. 
The total number of nodes over all $\trie_{\sigma}$'s is $O(N)$ as each leaf node in $\trie$ belongs to exactly one $\trie_{\sigma}$. Moreover, they can be extracted from $\trie$ in 
$O(N)$ total time. 

Next, we pre-process each $\trie_{\sigma}$ in time linear to its size for answering level ancestor queries in constant time~\cite{BenderF00}. A level ancestor query $(v,l)$ asks to return the ancestor $u$ of $v$ with $\nd(u) =l$. Finally, for each internal node $v$ in each $\trie_{\sigma}$, 
we compute $\fre(v)$, which is the maximum over $\fre(\cdot)$'s of all type-Y leaves under $v$. 
Note that $\fre(v) =0$ if all leaves under $u$ are of type-X. 
This step can also be implemented in linear time via a bottom up traversal of $\trie_{\sigma}$. 

We are now ready to present the final steps of our algorithm. 
For $p = 1,2,3,...,x$, we compute $L[p]$ as follows. 
Let $\X[p,x] = \alpha_{i}^h \circ \X[F(i+1),x]$.
Then, 
\begin{enumerate}
\item $L[p] = \mr(\alpha_{i})$ if $h > \mr(\alpha_{i})$.
\item Otherwise, find the leaf node $w$ in $\trie_{\alpha_{i}}$ corresponding to  
$\X[F(i+1),x]$ and its  lowest ancestor $v$ such that $\fre(v) \geq h$. 
Fix $v$ as the root when $i =x'$.
Then set $L[p] = h + \sd(v)$. 
The node $v$ can be computed via a binary search (using level ancestor queries) over the nodes on the path to root in $O(\log N)$ time. 
 
\end{enumerate}
This completes the description of our algorithm. The correctness is immediate from Observation~\ref{obs2}.
The total time complexity is $O(n\log N)$.

\section{An $O(N)$-Space and ${O}(N\log N)$-Time Algorithm}
Define $A[i]$ for $i = 1, 2, 3, ..., x'$, where
$$A[i] = \sum_{p=F(i)}^{F(i+1)-1}L[p]$$
Therefore, $$\ACS(\X,\Y) = \frac{1}{x} \sum_{i=1}^{x'} A[i]$$
We now present a new algorithm in which we compute each 
$A[i]$ in $O(\log N)$ time. For each internal node $v$ in $\trie_{\sigma}$, define $\wt(v)$ as follows: $\wt(\cdot)$ of the root node is $0$. For any other node $v$ with $v'$ being its parent,
$$\wt(v) = \wt(v')+ \fre(v) \times (\sd(v)-\sd(v'))$$
By performing a top-down tree traversal, we compute $\wt(\cdot)$ over all internal nodes in $\trie_{\sigma}$ in time linear to its size. Therefore, time over all  $\trie_{\sigma}$'s is $O(N)$. 
We now compute $A[i]$'s using the following steps. 

\begin{itemize}
\item 
For any  $i \in [1,x'-1]$, we first find the leaf node $w$ in
$\trie_{\alpha_{i}}$ corresponding to the suffix $\X[F(i+1),x]$. Also, find the lowest ancestor $v$ of $w$, such that there exits a type-Y leaf under $v$ (equivalently $\fre(v) \neq 0$) via  binary search using level ancestors queries. This step takes $O(\log N)$ time. 
Next, we have two cases and we handle them separately as follows.  
For brevity, let $m = \mr(\alpha_{i})$. 

\begin{itemize}
 \item If $f_{i} > m$, then 
 $$A[i] = \wt(v)+(1+2+3+...+m)+ m(f_{i}-m)$$
By simplifying, we have $A[i] = \wt(v)+m( f_{i} -(m -1)/2)$.

 \item If $f_{i} \leq  m$, then find the lowest ancestor $u$ of $w$, such that $\fre(u) \geq f_{i}$ (via  binary search using level ancestors queries). 
 Then, 
 $$A[i] = \wt(v) - \wt(u) + f_{i} \times \sd(u)+ f_{i}(1+f_{i})/2$$
\end{itemize}
\item 
For $i = x'$, $A[i]$ 
\begin{itemize}
\item is $(1+2+...+f_i) = f_i(f_i+1)/2$ if $f_i \leq m$ and 
\item is $(1+2+...+m)+ m(f_i-m) = m(f_i -(m-1)/2)$, otherwise. 
\end{itemize}

\end{itemize}
In summary, the time complexity is $O(N\log N)$ plus $O(\log N)$ for each $i$ in $[1,x']$. Therefore, total time is $O(N\log N)$. 
The correctness follows from Observation~\ref{obs2} and the definition of $\wt(\cdot)$. This completes the proof of Theorem 1.

\section*{Acknowledgments}

This research is supported in part by the U.S. National Science Foundation under the grant CCF-1703489. 

\bibliographystyle{abbrv} 
\bibliography{references}

\begin{thebibliography}{10}

\bibitem{RECOMB15}
S.~Aluru, A.~Apostolico, and S.~V. Thankachan.
\newblock Efficient alignment free sequence comparison with bounded mismatches.
\newblock In {\em Proceedings of the 19th Annual International Conference on
  Research in Computational Molecular Biology (RECOMB)}, pages 1--12, 2015.

\bibitem{apostolico2010maximal}
A.~Apostolico.
\newblock Maximal words in sequence comparisons based on subword composition.
\newblock In {\em Algorithms and Applications}, pages 34--44. Springer, 2010.

\bibitem{ApostolicoGLP16}
A.~Apostolico, C.~Guerra, G.~M. Landau, and C.~Pizzi.
\newblock Sequence similarity measures based on bounded hamming distance.
\newblock {\em Theoretical Computer Science}, 638:76--90, 2016.

\bibitem{BenderF00}
M.~A. Bender and M.~Farach{-}Colton.
\newblock The {LCA} problem revisited.
\newblock In {\em {LATIN} 2000: Theoretical Informatics, 4th Latin American
  Symposium, Punta del Este, Uruguay, April 10-14, 2000, Proceedings}, pages
  88--94, 2000.

\bibitem{bonham2013alignment}
O.~Bonham-Carter, J.~Steele, and D.~Bastola.
\newblock Alignment-free genetic sequence comparisons: a review of recent
  approaches by word analysis.
\newblock {\em Briefings in bioinformatics}, page bbt052, 2013.

\bibitem{RECOMB05}
D.~Burstein, I.~Ulitsky, T.~Tuller, and B.~Chor.
\newblock Information theoretic approaches to whole genome phylogenies.
\newblock In {\em Proceedings of the 9th Annual International Conference on
  Research in Computational Molecular Biology (RECOMB)}, pages 283--295, 2005.

\bibitem{chang2011phylogenetic}
G.~Chang and T.~Wang.
\newblock Phylogenetic analysis of protein sequences based on distribution of
  length about common substring.
\newblock {\em The Protein Journal}, 30(3):167--172, 2011.

\bibitem{comin2012alignment}
M.~Comin and D.~Verzotto.
\newblock Alignment-free phylogeny of whole genomes using underlying subwords.
\newblock {\em Algorithms for Molecular Biology}, 7(1):1, 2012.

\bibitem{domazet2009efficient}
M.~Domazet-Lo{\v{s}}o and B.~Haubold.
\newblock Efficient estimation of pairwise distances between genomes.
\newblock {\em Bioinformatics}, 25(24):3221--3227, 2009.

\bibitem{Farach97}
M.~Farach.
\newblock Optimal suffix tree construction with large alphabets.
\newblock In {\em 38th Annual Symposium on Foundations of Computer Science,
  {FOCS} '97, Miami Beach, Florida, USA, October 19-22, 1997}, pages 137--143,
  1997.

\bibitem{guyon2009comparison}
F.~Guyon, C.~Brochier-Armanet, and A.~Gu{\'e}noche.
\newblock Comparison of alignment free string distances for complete genome
  phylogeny.
\newblock {\em Advances in Data Analysis and Classification}, 3(2):95--108,
  2009.

\bibitem{Leim}
C.-A. Leimeister and B.~Morgenstern.
\newblock {\it kmacs}: the k-mismatch average common substring approach to
  alignment-free sequence comparison.
\newblock {\em Bioinformatics}, 30(14):2000--2008, 2014.

\bibitem{Manzini15}
G.~Manzini.
\newblock Longest common prefix with mismatches.
\newblock In {\em Proceedings of the 22nd International Symposium on String
  Processing and Information Retrieval (SPIRE)}, pages 299--310. Springer,
  2015.

\bibitem{ST2}
E.~M. McCreight.
\newblock A space-economical suffix tree construction algorithm.
\newblock {\em Journal of the ACM (JACM)}, 23(2):262--272, 1976.

\bibitem{Thank1}
S.~V. Thankachan, A.~Apostolico, and S.~Aluru.
\newblock A provably efficient algorithm for the k-mismatch average common
  substring problem.
\newblock {\em Journal of Computational Biology}, 23(6):472--482, 2016.

\bibitem{ThankachanCLAA16}
S.~V. Thankachan, S.~P. Chockalingam, Y.~Liu, A.~Apostolico, and S.~Aluru.
\newblock Alfred: a practical method for alignment-free distance computation.
\newblock {\em Journal of Computational Biology}, 23(6):452--460, 2016.

\bibitem{ThankachanCLKA15}
S.~V. Thankachan, S.~P. Chockalingam, Y.~Liu, A.~Krishnan, and S.~Aluru.
\newblock A greedy alignment-free distance estimator for phylogenetic
  inference.
\newblock In {\em Proceedings of 5th International Conference on Computational
  Advances in Bio and Medical Sciences (ICCABS)}, 2015.

\bibitem{ST1}
P.~Weiner.
\newblock Linear pattern matching algorithms.
\newblock In {\em Proceedings of the 14th Annual IEEE Symposium on Switching
  and Automata Theory (SWAT)}, pages 1--11, 1973.

\end{thebibliography}

\end{document}